\documentclass[11pt]{article}
\usepackage[letterpaper]{geometry}
\geometry{hmargin=1.05in, vmargin=1.0in} 
\renewcommand{\baselinestretch}{1.59}\large\normalsize 
\renewcommand{\baselinestretch}{1.0}\large\normalsize 
\usepackage{amsbsy}
\usepackage{amsmath}
\usepackage{amsthm}
\usepackage{amssymb}
\usepackage{amsfonts}
\usepackage{times}
\usepackage{graphicx}
\usepackage{color}
\usepackage{authblk}
\usepackage{jhtitle}
\usepackage{enumitem}

\usepackage[scaled=0.90]{helvet}
\usepackage{ifthen}
\usepackage{xcomment}
\usepackage{versions}
\usepackage{natbib}
\setlength{\leftmargini}{4.8mm}
\setlength{\leftmarginii}{4.8mm}

\newtheoremstyle{mytheorem}{}{}{\slshape}{}{\bfseries}{}{2mm}{}
\theoremstyle{mytheorem}
\newtheorem{remark}{Remark}
\newtheorem{theorem}{Theorem}

\newtheorem{definition}{Definition}

\newtheorem{lemma}{Lemma}

\newcommand{\commentt}[1]{}

\newcommand{\real}{\mathbb{R}}

\newcounter{itemnum}


\newboolean{include-notes-to-stay}
\setboolean{include-notes-to-stay}{false}

\ifthenelse{\boolean{include-notes-to-stay}}%
  {\newenvironment{notes-to-stay}
  {\begin{color}{blue}}{\end{color}}}%
  {\newxcomment[]{notes-to-stay}}

\newboolean{include-notes2}
\setboolean{include-notes2}{true}

\ifthenelse{\boolean{include-notes2}}%
  {}%
  {\newxcomment[]{notes2}}

\newboolean{include-notes-aixin}
\setboolean{include-notes-aixin}{false}

\ifthenelse{\boolean{include-notes-aixin}}%
  {\newenvironment{notes-aixin}
  {\begin{color}{magenta}}{\end{color}}}%
  {\newxcomment[]{notes-aixin}}

\numberwithin{equation}{section}
\setcounter{tocdepth}{4}




\newcommand{\NM}{N_{\max}}
\newcommand{\admiss}{\mathcal{N}} 

\begin{document}

\title{Optimal and Efficient Sample Size Re-estimation: A Dynamic Cost Framework}

\author[1]{Rui Jin}
\author[2]{Cai Wu}
\author[2]{Qiqi Deng}
\affil[1]{Novartis Pharmaceuticals Corporation, East Hanover, NJ, USA}
\affil[2]{Moderna Inc., Cambridge, MA, USA}

\date{ }


\maketitle
\renewcommand{\baselinestretch}{1.42}\large\normalsize

\begin{abstract}
Adaptive sample size re-estimation (SSR) is a well-established strategy for improving the efficiency and flexibility of clinical trials. Its central challenge is determining whether, and by how much, to increase the sample size at an interim analysis. This decision requires a rational framework for balancing the potential gain in statistical power against the risk and cost of further investment. Prevailing optimization approaches, such as the Jennison and Turnbull (JT) method, address this by maximizing power for a fixed cost per additional participant. While statistically efficient, this paradigm assumes the cost of enrolling another patient is constant, regardless of whether the interim evidence is promising or weak. This can lead to impractical recommendations and inefficient resource allocation, particularly in weak-signal scenarios.

%

We reframe SSR as a decision problem under dynamic costs, where the effective cost of additional enrollment reflects the interim strength of evidence. Within this framework, we derive two novel rules: (i) a likelihood-ratio based rule, shown to be Pareto optimal in achieving smaller average sample size under the null without loss of power under the alternative; and (ii) a return-on-investment (ROI) rule that directly incorporates economic considerations by linking SSR decisions to expected net benefit. To unify existing methods, we further establish a representation theorem demonstrating that a broad class of SSR rules can be expressed through implicit dynamic cost functions, providing a common analytical foundation for their comparison. Simulation studies calibrated to Phase III trial settings confirm that dynamic-cost approaches improve resource allocation relative to fixed-cost methods.

\end{abstract}


\renewcommand{\baselinestretch}{1.02}\large\normalsize
\vfil %

\thanks{
\noindent %
\textsl{Key words and phrases:}\, sample size re-estimation, adaptive design, optimal design, promising zone.}

\renewcommand{\baselinestretch}{1.42}\large\normalsize
\thispagestyle{empty} %
\eject %

\setcounter{page}{1}
\section{Introduction}
\label{sec:intro}

In the design of clinical trials, determining an appropriate sample size is a fundamental challenge, especially when preliminary information on the treatment effect is limited. An undersized trial may lack sufficient statistical power to detect a true treatment effect, risking a failed study, whereas an oversized trial is inefficient, consuming excessive resources and exposing more participants than necessary to a potentially suboptimal intervention.

To mitigate the uncertainty inherent in sample size determination, adaptive designs are now widely adopted. A common approach is the group sequential design~\citep{pocock1977group}, which allows for early stopping at pre-planned interim analyses. A more flexible alternative is sample size re-estimation, which permits adjustment of the final sample size based on interim data. SSR can be conducted in a blinded or unblinded manner~\citep{friede2013blinded, proschan2005two}. Within the class of unblinded SSR designs, the promising zone approach has garnered substantial attention. This method modifies the sample size based on the observed interim test statistic to achieve the target power. Extensive literature has examined methods to control the type I error rate, define the promising zone, and allocate additional participants effectively. Notably, many proposed promising zone designs can be shown to yield equivalent sample size rules through careful specification of design parameters~\citep{mehta2022optimal,hsiao2019optimal,pilz2021optimal}.

A key contribution to this field is the framework of Jennison and Turnbull (JT)~\citep{jennison2015adaptive}, who formulated SSR as an optimization problem balancing conditional power against the expected number of additional participants. The JT approach has strong theoretical appeal, yielding designs that minimize the expected sample size among all designs achieving a prespecified power. A limitation, however, is its reliance on a fixed per-patient cost, which implicitly assumes that the burden of additional enrollment is constant regardless of the strength of interim evidence. In practice, this assumption can lead to inefficient or counterintuitive recommendations, such as substantial sample size increases even when interim data provide little support for efficacy.

To address this limitation, we propose a generalized SSR framework built upon dynamic cost functions, where the effective cost of an additional participant adapts to the strength of the interim evidence. Within this framework, we introduce two novel, optimal decision rules. The first is a likelihood-ratio based rule, which we prove is Pareto optimal. This property ensures it provides a highly efficient and well-balanced solution to the multi-objective problem of maximizing power while controlling the average sample size under both the null and alternative hypotheses. Its dynamic cost function, which incorporates the likelihood ratio, systematically penalizes sample size increases for interim data that favor the null hypothesis. The second is a return-on-investment (ROI) based rule that directly integrates economic considerations into the design, linking SSR decisions to the trial's expected net benefit. The ROI-based approach demonstrates how a re-estimation rule can be calibrated to reflect different cost-benefit profiles and prior beliefs about treatment effectiveness, thereby allocating resources where the expected payoff is highest. We show that the dynamic cost function for this rule is a Bayesian normalization of the likelihood-ratio approach, establishing a clear theoretical link between these two innovative rules.

A central theoretical contribution of this work is a representation theorem for this generalized framework. This theorem establishes that if the conditional power function satisfies certain regularity conditions, there exists a one-to-one correspondence between an SSR rule and its implicit dynamic cost function. Consequently, a broad class of SSR rules can be reverse-engineered to reveal their underlying cost structures, providing a unified foundation for their analysis and comparison. As an application, we derive the implicit cost function for a constrained promising zone (CPZ) rule and conduct a rigorous comparison with the JT rule. This analysis reveals that the CPZ rule implicitly utilizes a dynamic cost that concentrating the allocation of additional resources at the turning point of its promising zone.

Simulation studies were conducted to evaluate the operating characteristics of our proposed methods against established benchmarks. The results demonstrate that the likelihood-ratio based rule is more conservative than the JT rule, recommending smaller sample size increases when interim data are more consistent with the null hypothesis or when the interim analysis is performed with limited data. This highlights the rule's ability to adapt not just to the interim effect estimate but also to its precision. Furthermore, the simulations confirm that the ROI-based rule effectively incorporates prior information and pre-specified cost-benefit profiles to tailor SSR decisions, a capability not present in the other designs considered.

The remainder of this paper is organized as follows. Section~\ref{sec:bk} reviews the JT framework. Section~\ref{sec:method} introduces our generalized framework and the two specific rules derived from it. Section~\ref{sec:rep} presents the representation theorem for sample size re-estimation rules. Section~\ref{sec:simulation} details the results from extensive simulation studies, and Section~\ref{sec:end} concludes with a discussion of the implications of our findings.

\section{Background of the JT Framework}\label{sec:bk}
We provide a brief introduction to the sample size re-estimation method developed by~\cite{jennison2015adaptive}, which aims to address the challenge of determining an optimal sample size increase after observing interim results.

We consider a clinical trial with a continuous endpoint, where measurements for the treatment and control arms are denoted by $Y_i^{trt}$ and $Y_i^{ctl}$, $i = 1, 2, \dots$, respectively. Assuming normality with a common known variance $\sigma^2$, we have:
\begin{equation*}
Y_i^{trt} \sim N(\mu_t, \sigma^2) \hspace{.5mm} \text{ and } \hspace{.5mm} 	Y_i^{ctl} \sim N(\mu_c, \sigma^2) \,.
\end{equation*}
The treatment effect $\theta$ is defined as the difference in means $\theta = \mu_t - \mu_c$. An interim analysis is planned after collecting $n_1$ measurements with a 1:1 allocation ratio. The observed treatment effect is $\hat{\theta}_1 = \bar{Y}^{trt} - \bar{Y}^{ctl}$ and the corresponding standardized test statistic is denoted by $Z_1$. 


The final decision uses an inverse-normal combination test with weights $w_1,w_2>0$ satisfying $w_1^2+w_2^2=1$ and critical value $C_{\mathrm{crit}}$. Conditional on $Z_1=z_1$ and total final sample size $n_2\ge n_1$, the conditional power under effect size $\theta$ can be written in the standard form
\begin{equation}\label{eq:CP}
CP_\theta(z_1,n_2)\;=\;\Phi\!\Big(A(z_1,n_2)\Big),\quad
A(z_1,n_2)\;=\;K\sqrt{n_2-n_1}\;-\;c(z_1),
\end{equation}
with $K=\theta/(2\sigma)$ and $c(z_1)=(C_{\mathrm{crit}}-w_1 z_1)/w_2$.

\cite{jennison2015adaptive} developed an optimization framework to determine the optimal total sample size, $n_2(z_1)$, for the final analysis after observing the interim statistic $Z_1=z_1$. This framework defines the optimal sample size re-estimation rule, $n_2^{JT}(z_1)$, as the solution to the following maximization problem:
\begin{equation}\label{JT}
n_2^{JT}(z_1) =\max_{n_2\in[n_{\min},N_{\max}]}\big\{CP_\theta(z_1,n_2)-\gamma\,(n_2-n_{\min})\big\}, \,,
\end{equation}
where $\gamma>0$ is a fixed cost parameter and $[n_{\min},N_{\max}]$ is the allowable range of total sample sizes. The term $CP_{{\theta}}(z_1, n_2(z_1))$ represents the conditional power, given $Z_1=z_1$, evaluated under a prespecified effect size of interest, ${\theta}$. The quantity $\gamma$ serves as a tuning parameter that represents the acceptable ``rate of exchange" between conditional power and sample size. It can be tied to the cost of additional subjects and is  used to control the extent to which the sample size may be increased based on the interim data.

A key advantage of this optimization framework is that a sample size rule that maximizes this conditional objective for every observed $z_1$ also possesses a desirable unconditional optimality property. Specifically, this rule minimizes the expected sample size, $E_{{\theta}}(N)$, for all designs that achieve the same overall power under the specified effect size, ${\theta}$. This framework provides a principled approach to determining sample size increases, balancing the desire for higher power with the cost of additional observations in a consistent manner.

\section{Optimal Sample Size Re-estimation with Dynamic Cost Functions}\label{sec:method}

In this section, we introduce a family of sample size re-estimation decision rules that utilize \textbf{dynamic cost functions}. This approach can be viewed as a generalization of the JT framework. The JT rule only adds a sample if the marginal gain in conditional power exceeds a constant threshold, $\gamma$. This constant threshold makes the decision to add samples heavily dependent on the property of the conditional power alone, which might not directly reflect the magnitude of the interim test statistic. Consequently, in regions where the observed interim test statistic, $z_1$, indicates a weak signal, the JT rule may inappropriately add a large number of samples to boost the conditional power. This decision is often not practical in real-world scenarios.

To mitigate this shortcoming and enable a more flexible allocation of sample size, we propose a generalized framework for the JT method that incorporates a dynamic cost function. Specifically, we allow the cost of adding a sample to depend on the interim test statistic, $z_1$. Therefore, we seek a rule that solves the following optimization problem:
\begin{equation}\label{dynamic}
\arg\max_{n_2\in[n_{\min},N_{\max}]}\left\{ CP_{\theta}(z_1, n_2) - \gamma(z_1) n_2 \right\} \,.
\end{equation}
The JT method can be considered a special case of this framework where $\gamma(z_1)$ is a constant function. In the following sections, we will present two specific forms of $\gamma(z_1)$ that address the limitations of the JT method and offer new interpretations for sample size re-estimation.

\subsection{Likelihood-ratio based  $\gamma(z_1)$}\label{sec:LR_gamma}
Let us consider a trial with treatment effect denoted by $\theta$, where the null hypothesis is $\Theta_0 = \{0\}$ and the alternative is $\Theta_1 = \{\theta\}$, with $\theta > 0$. We are interested in a sample size re-estimation rule $n_2(z_1)$ that solves the following optimization problem:
\begin{equation}\label{prob_double_constrain}
\max_{n_2 \in \admiss} P_{\theta} (\mathrm{Reject}\,H_0 | \Theta_1) \hspace{0.5cm} \text{subject to} \hspace{0.5cm} E_0 [n_2(Z_1)] \leq B_0, \hspace{0.5cm} \text{and} \hspace{0.5cm} E_{\theta} [n_2(Z_1)] \leq B_{\theta} \,,
\end{equation}
where $B_0, B_{\theta} > 0$ are predefined budget constraints. The goal is to maximize the unconditional power while controlling the average sample size under both the null and alternative hypotheses.

We let $f_0(z_1)$ and $f_\theta(z_1)$ denote densities of $Z_1$ under $\Theta_0$ and $\Theta_1$, respectively. We assume that $f_\theta(z_1) > 0 $ for $z_1 \in \real$. For nonnegative Lagrange multipliers $\lambda_1,\lambda_2\ge 0$, the Lagrangian is
\[
\mathcal{L}\!\left(n_2(\cdot),\lambda_1,\lambda_2\right)
= \int \Big\{ CP_\theta\!\big(z_1, n_2(z_1)\big)\, f_\theta(z_1)
- \big(\lambda_1 f_0(z_1)+\lambda_2 f_\theta(z_1)\big)\, n_2(z_1) \Big\}\, \mathrm{d}z_1 .
\]
Because the integral is separable in $z_1$, that is, $n_2$ enters the objective only through its value at the same $z_1$ and there are no cross $z_1$ couplings (no terms involving $n_2(z_1')$ with $z_1'\neq z_1$), maximizing $\mathcal{L}$ over  $n_2(\cdot)$ reduces to pointwise maximization of the integrand. On the set where $f_\theta(z_1)>0$, dividing by $f_\theta(z_1)$ does not change the maximizer and yields the equivalent problem
\[
\max_{\,n_2(z_1)} \;\Big\{ CP_\theta\big(z_1,n_2(z_1)\big)
- \Big(\lambda_2 + \lambda_1 \frac{f_0(z_1)}{f_\theta(z_1)}\Big)\, n_2(z_1) \Big\},
\]
i.e., a pointwise trade-off between conditional power and a dynamic cost term proportional to the likelihood ratio $f_0(z_1)/f_\theta(z_1)$.
We denote the solution of this optimization problem by $n_2^{LR}(z_1)$. This formulation yields a dynamic cost function in the JT fashion, where the cost is given by $\gamma(z_1) = \lambda_2 + \lambda_1 \frac{f_0(z_1)}{f_{\theta}(z_1)} $. The term $\frac{f_0(z_1)}{f_{\theta}(z_1)}$ is the likelihood ratio. A large value of this ratio indicates that the observed interim statistic $z_1$ is more likely to arise under the null hypothesis, thereby increasing the cost of adding a sample and making it harder to justify a sample size increase.

\subsubsection{Comparing  $n_2^{JT}$ with $n_2^{LR}$}
Let $n_2^{\mathrm{JT}}:\real \to\mathcal N$ denote the JT rule and $n_2^{\mathrm{LR}}:\real \to\mathcal N$ the proposed likelihood ratio based rule; we write $n_2^{\mathrm{JT}}(z_1)$ and $n_2^{\mathrm{LR}}(z_1)$ for their values at a given $z_1$. For a fixed cost parameter $\gamma$, we write $n_2^{\mathrm{JT}}(\cdot;\gamma)$.
We now formally compare the performance of the JT rule, $n_2^{JT}$, with our proposed likelihood-ratio based rule, $n_2^{LR}$. For a fixed cost parameter $\gamma$, let $n_2^{JT}(\cdot, \gamma)$ be the optimal rule from the JT method. We claim that $n_2^{JT}(\cdot, \gamma)$ is not the optimal solution for the following constrained optimization problem:
\begin{equation*}
\min_{n_2} E_{0} [n_2(Z_1)] \quad \text{subject to} \quad P_{\theta} (\mathrm{Reject}\,H_0; n_2) = P_{\theta} (\mathrm{Reject}\,H_0; n_2^{JT}(\cdot, \gamma)) \quad \text{and} \quad E_{\theta} [n_2(Z_1)] = E_{\theta} [n_2^{JT}(Z_1, \gamma)] \,.
\end{equation*}
This problem seeks the rule that minimizes the average sample size under the null hypothesis while maintaining the same unconditional power and average sample size under the alternative as the JT rule.

For non-zero Lagrange multipliers $\mu_1$ and $\mu_2$, the Lagrangian function for this problem is:
\begin{align*}
\mathcal{L}(n_2, \mu_1, \mu_2) &= E_0[n_2(Z_1)] - \mu_1 \left( P_{\theta} (\mathrm{Reject}\,H_0; n_2) - P_{\theta} (\mathrm{Reject}\,H_0; n_2^{JT}(\cdot, \gamma))\right) \\
&\quad - \mu_2 \left(E_{\theta} [n_2(Z_1)] - E_{\theta} [n_2^{JT}(Z_1)] \right)
\end{align*}
For a function $n_2$ to be a minimizer of $\mathcal{L}(n_2, \mu_1, \mu_2)$, the integrand must be minimized for each $z_1$. We treat $n_2$ as a continuous variable for the purpose of optimization, a standard approach for this class of problems. While the sample size must be an integer, treating it as a continuous variable allows for the use of calculus-based optimization. The resulting continuous solution can then be rounded to the nearest integer. The performance loss from this approximation is typically negligible. Thus, a necessary condition for $n_2$ to be a minimizer is:
\begin{equation*}
\frac{\partial}{\partial n_2(z_1)} \left\{n_2(z_1)f_0(z_1) - \mu_1 CP_{\theta}(z_1, n_2(z_1))f_{\theta}(z_1) - \mu_2 n_2(z_1) f_{\theta}(z_1) \right\} = 0 \,.
\end{equation*}
which simplifies to:
\begin{equation*}
\frac{\partial CP_{\theta}(z_1, n_2(z_1))}{\partial n_2(z_1)} = \frac{\mu_2}{\mu_1} + \frac{1}{\mu_1} \frac{f_0(z_1)}{f_{\theta}(z_1)} \,.
\end{equation*}
For the JT method, the corresponding condition is $\frac{\partial CP_{\theta}(z_1, n_2(z_1))}{\partial n_2(z_1)} = \gamma$, which is a constant. Since the right-hand side of our derived condition is a function of $z_1$, this proves that $n_2^{JT}(z_1)$ is not the rule that minimizes $E_0[N]$ for its given power and average sample size under the alternative hypothesis.

By setting $\lambda_1 = \frac{1}{\mu_1}$ and $\lambda_2 = \frac{\mu_2}{\mu_1}$, we can see that our extended framework can provide a sample size re-estimation rule that is better than the JT method in the sense that it yields a smaller average sample size under the null hypothesis for a comparable power and average sample size under the alternative.

\subsubsection{Pareto optimality of $n_2^{LR}$}
The likelihood-ratio based re-estimation rule, $n_2^{LR}$ is a solution to multi-objective optimization problem. As such, it possesses a fundamental property known as Pareto optimality.

\begin{definition}
	A rule $n_2^*$ is Pareto optimal if no other rule $n_2$ exists that is strictly better on at least one objective while being no worse on the others.
\end{definition}

\begin{lemma}
	$n_2^{LR}$ is Pareto optimal. 
\end{lemma}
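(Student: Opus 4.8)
The plan is to recognize that $n_2^{LR}(z_1)$ arises as the maximizer of a positively weighted scalarization of the three competing objectives, and then to invoke the classical argument that such a scalarized optimum cannot be Pareto dominated. First I would make the three objectives explicit: write $J_1(n_2) = \int CP_{\theta}(z_1, n_2(z_1)) f_{\theta}(z_1)\, dz_1$ for the unconditional power $P_{\theta}(\text{Rej } H_0 \mid \Theta_1)$ (to be maximized), $J_2(n_2) = \int n_2(z_1) f_0(z_1)\, dz_1 = E_0[n_2(z_1)]$ for the null average sample size (to be minimized), and $J_3(n_2) = \int n_2(z_1) f_{\theta}(z_1)\, dz_1 = E_{\theta}[n_2(z_1)]$ for the alternative average sample size (to be minimized). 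By construction $n_2^{LR}(z_1)$ maximizes the Lagrangian $\mathcal{L}(n_2) = J_1(n_2) - \lambda_1 J_2(n_2) - \lambda_2 J_3(n_2)$ for the multipliers $\lambda_1, \lambda_2 \ge 0$ attached to the two budget constraints; indeed the pointwise maximization already established in Section~\ref{sec:LR_gamma} shows this integral functional is maximized rule-wise by $n_2^{LR}$.

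Next I would argue by contradiction. Suppose $n_2^{LR}$ is not Pareto optimal, so there exists a feasible rule $\tilde{n}_2(z_1)$ that is no worse on every objective and strictly better on at least one: $J_1(\tilde{n}_2) \ge J_1(n_2^{LR})$, $J_2(\tilde{n}_2) \le J_2(n_2^{LR})$, $J_3(\tilde{n}_2) \le J_3(n_2^{LR})$, with at least one inequality strict. Substituting into $\mathcal{L}$ and using $\lambda_1, \lambda_2 > 0$ would then give $\mathcal{L}(\tilde{n}_2) > \mathcal{L}(n_2^{LR})$, since each improved objective raises $\mathcal{L}$ (its weight being positive) while each unchanged term leaves $\mathcal{L}$ intact. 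This contradicts the optimality of $n_2^{LR}$ as the maximizer of $\mathcal{L}$, completing the argument.

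The step I expect to be delicate is guaranteeing that the weights are strictly positive, since a weighted-sum scalarization only delivers weak Pareto optimality when a multiplier vanishes. If, for instance, $\lambda_1 = 0$, a competitor could lower the null average sample size without changing $\mathcal{L}$, and the contradiction would break down. I would resolve this by appealing to complementary slackness at the optimum: both budget constraints $E_0[n_2(z_1)] \le B_0$ and $E_{\theta}[n_2(z_1)] \le B_{\theta}$ are binding (otherwise a budget could be tightened without altering the solution), which forces $\lambda_1, \lambda_2 > 0$. Alternatively one may state the conclusion for the regime of active constraints or phrase it in terms of proper Pareto optimality. A secondary technical point is the continuous relaxation of the integer-valued $n_2(z_1)$ and the attainment of the maximum, but as already noted in the text this relaxation is standard and the rounding loss is negligible, so it does not affect the optimality conclusion.
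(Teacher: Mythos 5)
Your proposal follows essentially the same route as the paper: both establish Pareto optimality by observing that $n_2^{LR}$ maximizes the weighted scalarization $\mathcal{L}(n_2) = P_{\theta}(\text{Rej } H_0; n_2) - \lambda_1 E_0[n_2] - \lambda_2 E_{\theta}[n_2]$ and deriving a contradiction from the existence of a dominating rule. Your attention to the strict positivity of the multipliers is in fact a genuine refinement rather than a detour: the paper's proof only treats the case in which the dominating rule strictly improves the power (whose weight is $1 > 0$, so $\lambda_1, \lambda_2 \geq 0$ suffices), and silently omits the cases where the strict improvement occurs in $E_0[n_2]$ or $E_{\theta}[n_2]$ alone, which do require $\lambda_1, \lambda_2 > 0$ (or the complementary-slackness / binding-constraint argument you sketch) for the contradiction to go through.
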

\begin{proof}
	For $\lambda_1 \geq 0$ and $\lambda_2 \geq 0$,  $n_2^{LR}$ maximizes $\mathcal{L}(n_2) = P_{\theta} (\mathrm{Reject}\,H_0; n_2) - \lambda_21 E_0 [n_2(Z_1)] - \lambda_2 E_{\theta}[n_2(Z_1)]$. Assume that $n_2^{LR}$  is not Pareto optimal. By definition, this means there exists a rule $n^*_2$ such that
	\begin{enumerate}
		\item[] $P_{\theta} (\mathrm{Reject}\,H_0; n_2^*) > P_{\theta} (\mathrm{Reject}\,H_0; n^{LR}_2) $ \,,
		\item[] $E_0 [n^*_2(Z_1)] \leq E_0 [n^{LR}_2(Z_1)] $ \,,
		\item[] $E_{\theta} [n^*_2(Z_1)] \leq E_{\theta} [n^{LR}_2(Z_1)] $ \,.
	\end{enumerate}
Then, we got $\mathcal{L}(n^*_2) > \mathcal{L}(n^{LR}_2) $, which contradicts to the fact that $n_2^{LR}$ maximizes $\mathcal{L}(n_2) $.
\end{proof}
The Pareto optimality of $n_2^{LR}$ establishes it as an efficient and well-balanced solution for the multi-objective problem of maximizing power while controlling the average sample size under both $\Theta_0$ and $\Theta_1$. 

In the context of sample size re-estimation, Pareto optimality means that the likelihood-ratio based rule avoids wasteful designs. If one attempts to reduce sample size under the null, the only way to do so is by lowering power; conversely, if one increases power, additional participants must be recruited. Thus, $n_2^{LR}$ represents an efficiency frontier: it offers the best achievable balance between power and resource use. This provides reassurance to practitioners that adopting $n_2^{LR}$ does not leave potential efficiency gains unexploited.

\subsection{Return on investiment (ROI) based $\gamma(z_1)$}

Let us consider a cost-benefit framework for sample size re-estimation. Suppose each additional sample has a cost $c > 0$. If the final results declare the treatment effective, the sponsor receives a return $V > 0$. Given an interim test statistic $z_1$ and a final sample size rule $n_2(z_1)$, the expected net return is given by:
\begin{equation}
E[\text{Net Return} | z_1] = V \cdot P(\text{treatment effective and final reject} | z_1, n_2(z_1)) - c n_2^{}(z_1) \,.
\end{equation}
The term $P(\text{treatment effective and final reject} | z_1, n_2^{}(z_1))$ can be factored using the rules of conditional probability:
\begin{equation*}
P(\text{treatment effective and final reject} | z_1, n_2^{}(z_1)) = P(\text{effective } | z_1) \cdot P(\text{final reject} | \text{effective}, z_1, n_2^{}(z_1)) \,.
\end{equation*}
Let $p(z_1) = P(\text{effective} | z_1)$ be the posterior probability of the treatment being effective, and $CP_{\theta}(z_1, n_2^{}(z_1))$ be the conditional power. The expected net return can then be expressed as:
\begin{equation}
E[\text{Net Return} | z_1] = V p(z_1) \cdot CP_{\theta}(z_1, n_2^{}(z_1)) - c n_2^{}(z_1) \,.
\end{equation}
Maximizing this expected return is equivalent to solving a JT-style optimization problem where the cost parameter is a dynamic function of $z_1$, specifically $\gamma(z_1) = c / (V p(z_1))$. This formulation leads to more aggressive sample size increases for stronger interim results (i.e., higher $p(z_1)$) compared to the standard JT method with a fixed $\gamma$. 

We denote the resulting rule for maximizing $E[\text{Net Return} | z_1]$ by $n_2^{ROI}(z_1)$.  This rule is the Bayes optimal solution for maximizing the corresponding expected net return. Being Bayes optimal means that the rule is the best possible decision strategy for the given objective, as it formally accounts for all available information (prior beliefs and observed data) to maximize the expected return. This provides a theoretical foundation for making resource allocation decisions based on $n_2^{ROI}(z_1)$.

We can derive a more explicit form for $\gamma(z_1)$ using a two-state prior model. Let the parameter space be partitioned into two disjoint sets, $\Theta_0$ (no treatment effect) and $\Theta_1$ (effective treatment), with prior probabilities $\pi_0 = P(\theta \in \Theta_0)$ and $\pi_1 = P(\theta \in \Theta_1)$. By Bayes' rule, the posterior probability of effectiveness given $z_1$ is:
\begin{equation*}
p(z_1) = \frac{\pi_1 f_{\theta}(z_1)}{\pi_0 f_0(z_1) + \pi_1 f_{\theta}(z_1)} \,,
\end{equation*}
where $f_0(z_1)$ and $f_{\theta}(z_1)$ are the density functions of the interim test statistic under $\Theta_0$ and $\Theta_1$, respectively. For composite hypotheses, these densities are replaced by the marginal likelihoods.

Substituting this expression into the dynamic cost function yields:
\begin{equation}\label{gamma_roi}
\gamma(z_1) = \frac{c}{V p(z_1)} = \frac{c}{V} \left(1 + \frac{\pi_0 f_0(z_1)}{\pi_1 f_{\theta}(z_1)} \right) \,.
\end{equation}
This result shows that the ROI-based extension is a Bayesian normalization of the likelihood-ratio based approach discussed in Section~\ref{sec:LR_gamma}. This framework connects sample size re-estimation directly to a return on investment  perspective. This ROI alignment property is highly desirable for decision-makers who care not only about statistical power but also about the expected net benefit. A rule derived from this extension allocates resources where the expected payoff is highest, which is both economically rational and clinically appealing.

To aid practitioners, we outline practical guidelines for ROI calibration. The cost per participant $c$ can be estimated directly from trial budgets (including recruitment, treatment, and follow-up costs). The return $V$ may be approximated by the expected incremental revenue or health benefit of a successful approval, discounted for market size and time horizon. Priors $(\pi_0, \pi_1)$ can be elicited from historical data or expert belief about similar compounds in the same therapeutic area. 

\subsubsection{Link to portfolio management}

The ROI framework provides a powerful quantitative tool for portfolio management by translating the statistical outcomes of clinical trials into a standardized financial metric: the Expected Net Return (ENR). This allows for a direct, data-driven comparison of different projects competing for limited company resources. The ENR is inherently risk-adjusted through the posterior probability $p(z_1)$. A project with a massive potential return $V$ will still have a low ENR if the interim data suggests a low probability of success, preventing the company from chasing long shots with poor evidence. Furthermore, this method replaces subjective debate with a transparent, quantitative ranking of projects. It helps defend difficult decisions, such as allocating resources to a riskier project over a seemingly safer one, based on a clear and defensible rationale.

\section{Representation Theorem for Sample Size Re-estimation Rules}\label{sec:rep}
The landscape of SSR methodologies is populated by numerous strategies, including promising zone designs~\citep{hsiao2019optimal,mehta2022optimal} , rules based on achieving a target conditional power, and other procedures. These methods are often described in different terms and justified by different heuristics, making direct comparison difficult. The dynamic cost framework potentially provides a common language and a unified foundation for their analysis. By calculating the implied cost function, $\gamma(z_1)$, for each distinct rule, their underlying assumptions about the value of information and the trade-off between power and sample size can be made explicit. This allows for a rigorous, ``apples-to-apples" comparison of various SSR rules, revealing the true operational characteristics that might be obscured by their procedural definitions. 

We now present a representation theorem for sample size re-estimation rules under several regularity conditions. 

	

\begin{theorem}\label{ssr_rep}
	Let $\mathcal P\subseteq\mathbb R$ denote the promising zone and let $\mathcal N:=(n_{\min},N_{\max})$.
	Suppose that for every $z_1\in\mathcal P$ the map $n_2\mapsto CP_\theta(z_1,n_2)$ is differentiable and strictly concave on $\mathcal N$.
	Let $n_2^*:\mathcal P\to\mathcal N$ be an SSR rule taking values in the interior. Then there exists a unique function
	\[
	\gamma^*:\mathcal P\to\mathbb R,\qquad 
	\gamma^*(z_1):=\left.\frac{\partial}{\partial n_2}CP_\theta(z_1,n_2)\right|_{n_2=n_2^*(z_1)},
	\]
	such that, for every $z_1\in\mathcal P$, the point $n_2^*(z_1)$ is the unique maximizer of
	\[
	\max_{n_2\in\mathcal N}\;\Big\{ CP_\theta(z_1,n_2)-\gamma^*(z_1)\,n_2 \Big\}.
	\]
	Moreover, if $\tilde\gamma:\mathcal P\to\mathbb R$ satisfies 
	$n_2^*(z_1)\in\arg\max_{n_2\in\mathcal N}\{CP_\theta(z_1,n_2)-\tilde\gamma(z_1)n_2\}$ for all $z_1\in\mathcal P$, 
	then $\tilde\gamma=\gamma^*$  on $\mathcal P$.
\end{theorem}

\begin{proof}
	Fix any $z_1\in\mathcal P$ and define 
	\[
	\gamma^*(z_1):=\left.\frac{\partial}{\partial n_2}CP_\theta(z_1,n_2)\right|_{n_2=n_2^*(z_1)}.
	\]
	Consider the one dimensional objective
	\[
	\phi_{z_1}(n_2)\ :=\ CP_\theta(z_1,n_2)-\gamma^*(z_1)\,n_2,\qquad n_2\in\mathcal N.
	\]
	By assumption, $n_2\mapsto CP_\theta(z_1,n_2)$ is strictly concave and differentiable on $\mathcal N$; subtracting
	a linear term preserves strict concavity. Hence $\phi_{z_1}$ is strictly concave on $\mathcal N$.
	Moreover,
	\[
	\frac{\partial}{\partial_{n_2}}\phi_{z_1}(n_2)\ =\ 	\frac{\partial}{\partial_{n_2}}CP_\theta(z_1,n_2)-\gamma^*(z_1),
	\]
	so $	\frac{\partial}{\partial_{n_2}}\phi_{z_1}\big(n_2^*(z_1)\big)=0$ by construction. For a strictly concave, differentiable objective,
	the first order condition is both necessary and sufficient and the maximizer is unique; thus $n_2^*(z_1)$ is the
	unique maximizer of $\phi_{z_1}$ over $\mathcal N$.
	
	For uniqueness of the cost, suppose $\tilde\gamma:\mathcal P\to\mathbb R$ also satisfies
	$n_2^*(z_1)\in\arg\max_{n_2\in\mathcal N}\{CP_\theta(z_1,n_2)-\tilde\gamma(z_1)n_2\}$ for all $z_1\in\mathcal P$.
	Strict concavity implies uniqueness of the maximizer and, hence, the first order condition at $n_2^*(z_1)$:
	$	\frac{\partial}{\partial_{n_2}}CP_\theta(z_1,n_2^*(z_1))=\tilde\gamma(z_1)$. By the definition of $\gamma^*$, we conclude
	$\tilde\gamma(z_1)=\gamma^*(z_1)$ for all $z_1\in\mathcal P$.
\end{proof}

The one-to-one correspondence described in Theorem~\ref{ssr_rep} applies when the SSR rule yields an interior solution. When the rule assigns a boundary value, the equality defining the cost function is replaced by an inequality:
\begin{itemize}
	\item \textbf{Maximum boundary} ($n_2^*(z_1) = N_{\max}$): the decision is rationalized by any cost function small enough to favor increasing the sample size until the cap. Formally,
	\[
	\gamma^*(z_1) \leq \left. \frac{\partial CP_{\theta}(z_1, n_2)}{\partial n_2} \right|_{n_2 = N_{\max}}.
	\]
	\item \textbf{Minimum boundary} ($n_2^*(z_1) = n_{\min}$): the decision is rationalized by any cost function large enough to discourage increasing the sample size. Formally,
	\[
	\gamma^*(z_1) \geq \left. \frac{\partial CP_{\theta}(z_1, n_2)}{\partial n_2} \right|_{n_2 = n_{\min}}.
	\]
\end{itemize}
In these boundary cases, a cost function consistent with the rule still exists, but it is no longer uniquely defined.

\begin{remark}
	Denote the marginal gain in power (MGP) as the first partial derivative of the conditional power with respect to the sample size: $MGP(z_1, n_2) = \frac{\partial CP_{\theta}(z_1, n_2)}{\partial n_2}$. The strictly concave assumption on $CP_{\theta}(z_1, n_2) $ is equivalent to diminishing returns property of MGP, that is, $MGP(z_1, n_2)$ is a strictly monotonically decreasing function of $n_2$.  It implies that each additional participant provides less increment on the conditional power than the one before.
\end{remark}

\begin{remark}
 For the purpose of comparing SSR rules, our primary interest lies in their behavior within the promising zone. Hence, it suffices to establish a one-to-one correspondence between SSR rules and cost functions restricted to this region. Accordingly, it is sufficient for Theorem~\ref{ssr_rep} that the objective function be strictly concave with respect to $n_2$ for $z_1$ within the promising zone.
\end{remark}

\begin{remark}
The assumptions underlying Theorem~\ref{ssr_rep} can be relaxed. By employing tools such as subgradients, the requirements of differentiability and strict concavity may be replaced with considerably weaker conditions, namely upper semi-continuity and concavity. Under these weaker assumptions, one can define equivalence classes of SSR rules, allowing for discrete sample size functions $n_2(z_1)$ or non-normal endpoints. A formal development of this generalized version of the theorem, however, lies beyond the scope of the present work.
\end{remark}

The ability to reverse-engineer a SSR rule to find its implied cost function provides a powerful auditing tool for evaluating proposed or existing SSR designs. An analysis might reveal that a seemingly sensible rule implies a $\gamma$ function that is erratic, or counter-intuitive. For instance, a rule might implicitly value an additional participant more highly in a moderately promising zone than in a very promising zone, a behavior that may be difficult to justify rationally. Identifying such anomalies can serve as a critical red flag, indicating that the rule could lead to inefficient or logically inconsistent resource allocation under certain interim outcomes. This analytical capability allows for a deeper and more critical appraisal of SSR strategies, moving beyond surface-level performance metrics like average power and sample size to scrutinize the logical coherence of the underlying decision-making process.

Beyond retrospective evaluation, the representation theorem offers a constructive paradigm: sponsors may specify a rational cost function that reflects their risk-benefit preferences and derive the corresponding SSR rule. This approach enhances transparency and provides regulators with a principled justification that goes beyond operating characteristics to articulate the rationale underlying the design.
\subsection{Two sufficient conditions for strict concavity of $CP_{\theta}(z_1, n_2)$}

In this section,  we provide two sufficient conditions under which the conditional power $CP_{\theta}(z_1, n_2)$ is strictly concave in $n_2$.  
The first condition is based on a lower bound for $CP_{\theta}(z_1, n_2)$: whenever the conditional power exceeds this bound, it is strictly concave with respect to $n_2$.  
The second condition involves a lower bound on $z_1$: for values of $z_1$ above this threshold, $CP_{\theta}(z_1, n_2)$ is strictly concave in $n_2$ regardless of the specific value of $n_2$.  
Both conditions are straightforward to compute and practically useful for verifying the assumption required in Theorem~\ref{ssr_rep}. We introduce the notation $K = \theta / (2\sigma)$ and 
$
c = \frac{C_{\text{crit}} - w_1 z_1}{w_2},
$
where $w_1$ and $w_2$ are the weights in the inverse normal combination test, and $C_{\text{crit}}$ denotes the critical value corresponding to a given significance level $\alpha$.  
With this notation, we obtain the following lemma.

\begin{lemma}\label{suff_concave}
Assume normality. 	If 
	\[
	CP_{\theta}(z_1, n_2(z_1)) > \Phi\!\left(-\frac{1}{K \sqrt{\,\\NM - n_1\,}} \right)
	\]
	or
	\[
	z_1 > \max_{n_2 \in \admiss} \left[ \frac{C_{\text{crit}} - w_2\!\left(K \sqrt{\,n_2(z_1) - n_1\,} - \frac{1}{K \sqrt{\,n_2(z_1) - n_1\,}} \right)}{w_1} \right],
	\]
	then $CP_{\theta}(z_1, n_2(z_1))$ is strictly concave in $n_2(z_1)$.
\end{lemma}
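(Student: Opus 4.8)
The plan is to exploit the closed form of the conditional power that is available under normality with the inverse normal combination test, and then to reduce strict concavity to the sign of a single scalar quantity. First I would write $Z_2 = w_1 Z_1 + w_2 Z^{(2)}$, where $Z^{(2)}$ is the standardized statistic from the $n_2 - n_1$ additional observations, so that under the alternative $\theta$ and conditionally on $Z_1 = z_1$ one has $Z^{(2)} \sim N\!\left(K\sqrt{n_2 - n_1},\,1\right)$ with $K = \theta/(2\sigma)$. Rejecting when $Z_2 > C_{\text{crit}}$ then gives the explicit expression
\[
CP_\theta(z_1, n_2) = \Phi\!\left(K\sqrt{n_2 - n_1} - c\right), \qquad c = \frac{C_{\text{crit}} - w_1 z_1}{w_2}.
\]
This closed form is the workhorse for everything that follows.

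Next I would differentiate twice in $n_2$. Writing $s = \sqrt{n_2 - n_1}$ and $a = Ks - c$, a direct computation (using $\phi'(x) = -x\phi(x)$) yields
\[
\frac{\partial^2 CP_\theta}{\partial n_2^2} = -\frac{K}{4}\,\phi(a)\,(n_2 - n_1)^{-3/2}\,\bigl[\,1 + Ks(Ks - c)\,\bigr].
\]
Since the prefactor $\tfrac{K}{4}\phi(a)(n_2-n_1)^{-3/2}$ is strictly positive, strict concavity in $n_2$ is equivalent to the single scalar inequality $Ks(Ks - c) > -1$. I would treat this as the master condition and show that each of the two hypotheses of the lemma implies it (the first essentially sharply, the second via a convenient strengthening).

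For the first condition, dividing the master inequality by $Ks > 0$ gives $a = Ks - c > -1/(Ks)$; applying the increasing map $\Phi$ turns this into $CP_\theta(z_1, n_2) = \Phi(a) > \Phi\!\left(-1/(K\sqrt{n_2 - n_1})\right)$. Because the right-hand threshold is increasing in $n_2$ while $CP_\theta$ is itself increasing in $n_2$, replacing $n_2$ by $\max n_2(z_1)$ in the threshold produces a single bound valid uniformly over the admissible range, which certifies concavity throughout; this is exactly the stated inequality. For the second condition I would instead impose the stronger pointwise requirement $a > 1/(Ks)$, which forces $Ks(Ks - c) = Ks\cdot a > 1 > -1$ and hence the master condition with room to spare. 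Rearranging $Ks - c > 1/(Ks)$ into $c < Ks - 1/(Ks)$ and substituting $c = (C_{\text{crit}} - w_1 z_1)/w_2$ gives $z_1 > \bigl[C_{\text{crit}} - w_2\bigl(K\sqrt{n_2 - n_1} - 1/(K\sqrt{n_2 - n_1})\bigr)\bigr]/w_1$; demanding it for every admissible $n_2$ yields the maximum on the right-hand side, and because this bound no longer references the power value it holds regardless of the specific value of $n_2$, as claimed.

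The routine but error-prone step is the second-derivative computation and the clean extraction of the sign factor $1 + Ks(Ks-c)$. The conceptual care is the monotonicity argument that lets pointwise concavity extend across the whole admissible interval, which is precisely what motivates the appearance of $\max n_2(z_1)$ in the first condition. I expect the main obstacle to be bookkeeping the two regimes cleanly: the first condition is the exact boundary of the master inequality and is therefore essentially tight, whereas the second deliberately over-satisfies it (replacing $-1$ by $+1$), which is exactly why it is phrased with $K\sqrt{n_2-n_1} - 1/(K\sqrt{n_2-n_1})$ rather than the sharp $K\sqrt{n_2-n_1} + 1/(K\sqrt{n_2-n_1})$; flagging this asymmetry avoids the temptation to ``correct'' the sign.
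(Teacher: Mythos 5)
Your proposal is correct and follows essentially the same route as the paper: write $CP_\theta(z_1,n_2)=\Phi(A)$ with $A=K\sqrt{n_2-n_1}-c$, compute the second derivative, and reduce strict concavity to the scalar condition $A>-1/(K\sqrt{n_2-n_1})$, from which the two stated hypotheses follow. Your write-up is in fact slightly more careful than the paper's at the final step, since you make explicit the monotonicity argument behind the appearance of $\max n_2(z_1)$ and correctly flag that the second condition (with $K\sqrt{n_2-n_1}-1/(K\sqrt{n_2-n_1})$ rather than the sharp $+$ sign) is a strictly stronger sufficient condition rather than an equivalence, a point the paper glosses over by calling both conditions ``equivalent'' to the master inequality.
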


\begin{proof}
	Let
	\[
	A(n_2(z_1)) = K \sqrt{\,n_2(z_1) - n_1\,} - c, 
	\quad \text{where } c = \tfrac{C_{\text{crit}} - w_1 z_1}{w_2}.
	\]
	The second derivative of $CP_{\theta}(z_1, n_2(z_1))$ with respect to $n_2(z_1)$ is
	\[
	\frac{\partial^2 CP_{\theta}(z_1, n_2(z_1))}{\partial n_2(z_1)^2} 
	= -A(n_2(z_1)) \, \phi(A(n_2(z_1))) \!\left(\frac{d A(n_2(z_1))}{d n_2(z_1)} \right)^2 
	+ \phi(A(n_2(z_1))) \frac{d^2 A(n_2(z_1))}{d n_2(z_1)^2}.
	\]
	
	Strict concavity requires
	\[
	\frac{\partial^2 CP_{\theta}(z_1, n_2(z_1))}{\partial n_2(z_1)^2} < 0.
	\]
	Noting that $\phi(\cdot) > 0$ and
	\[
	\frac{d^2 A(n_2(z_1))}{d n_2(z_1)^2} = -\frac{K}{4}\,(n_2(z_1)-n_1)^{-3/2} < 0,
	\]
	it suffices that
	\[
	A(n_2(z_1)) > -\frac{1}{K \sqrt{\,n_2(z_1) - n_1\,}}.
	\]
	This inequality is equivalent to either
	\[
	CP_{\theta}(z_1, n_2(z_1)) > \Phi\!\left(-\frac{1}{K \sqrt{\,\\NM - n_1\,}} \right)
	\]
	or
	\[
	z_1 > \max_{n_2 \in \admiss} \left[ \frac{C_{\text{crit}} - w_2\!\left(K \sqrt{\,n_2(z_1) - n_1\,} - \tfrac{1}{K \sqrt{\,n_2(z_1) - n_1\,}} \right)}{w_1} \right].
	\]
	Hence, either condition guarantees strict concavity.
\end{proof}

\subsection{An usage example of the representation theorem}
In this section, we illustrate the utility of Theorem~\ref{ssr_rep} by analyzing the ``Constrained Promising Zone" (CPZ) rule introduced in~\cite{hsiao2019optimal}. Assuming $\theta = 0.29$, the CPZ rule is defined as the solution to the following constrained optimization problem:

\begin{enumerate}
	\item[] \textbf{Objective:} Maximize $CP_{\theta}(z_1, n_2(z_1))$ with respect to $n_2(z_1)$ for all $z_1$, subject to
	\item[] \textbf{Constraint 1:} $280 \leq n_2(z_1) \leq 420$,
	\item[] \textbf{Constraint 2:} $CP_{\theta}(z_1, n_2(z_1)) \geq 0.8$,
	\item[] \textbf{Constraint 3:} $CP_{\theta}(z_1, n_2(z_1)) \leq 0.9$.
\end{enumerate}

On the right panel of Figure~\ref{fig:ex5_cost_revised}, the CPZ rule increases the sample size when the interim statistic $z_1$ falls within an interval, which they identify as $[1.187, 2.338]$. By lemma~\ref{suff_concave}, we can show that the conditional power is differentiable and strictly concave with respect to $n_2$ for any $z_1 > 0.474$. Therefore, we can deduce the implied cost function, $\gamma_{CPZ}(z_1)$, in three regions:

\begin{enumerate}
	\item \textbf{Lower Promising Zone ($z_1 \in [1.187, 1.627]$):} For this interval of $z_1$, its right endpoint $1.627$ will be denoted as the turning point of the CPZ design. Within the lower promising zone, the sample size is increased to the maximum, $n_2 = 420$.  This implies that the marginal benefit of adding another participant exceeds the cost, even at the maximum sample size. Thus, the cost function is bounded by:
	$$
	\gamma_{CPZ}(z_1) \le \left. \frac{\partial CP_{\theta}(z_1, n_2)}{\partial n_2} \right|_{n_2 = 420}
	$$
	
	\item \textbf{Upper Promising Zone ($z_1 \in (1.627, 2.338]$ ):} The sample size is set to an intermediate value $\tilde{n}_2(z_1)$ such that the conditional power hits a ceiling of 0.9. Here, the marginal benefit equals the marginal cost, uniquely defining the cost function:
	$$
	\gamma_{CPZ}(z_1) = \left. \frac{\partial CP_{\theta}(z_1, n_2)}{\partial n_2} \right|_{n_2 = \tilde{n}_2(z_1)}
	$$
	
	\item \textbf{Outside the Promising Zone:} The sample size is kept at the minimum, $n_2 = 280$. This implies that the marginal cost of adding a participant is greater than or equal to the marginal benefit:
	$$
	\gamma_{CPZ}(z_1) \ge \left. \frac{\partial CP_{\theta}(z_1, n_2)}{\partial n_2} \right|_{n_2 = 280}
	$$
\end{enumerate}

Figure~\ref{fig:ex5_cost_revised} compares the implied cost function and the resulting sample size rule of the CPZ method with the JT rule. The plots reveal that while both rules operate in the same region, their underlying rationales differ significantly. The variable nature of $\gamma_{CPZ}(z_1)$ shows the CPZ rule to be an adaptive strategist, valuing power differently based on the interim outcome. It assigns the lowest cost when $z_1$ equals to the turning point. In contrast, the more stable cost implied by the JT rule reflects a greater focus on unconditional optimization. Even though the final sample size rules for the CPZ and JT designs can appear superficially similar, their implied cost functions show they are driven by fundamentally different philosophies. This example shows that how to use Theorem~\ref{ssr_rep} to reveal a sample size re-estimation rule's underlying philosophy.

\begin{figure}[h!]
	\centering
	\includegraphics[width=0.48\textwidth]{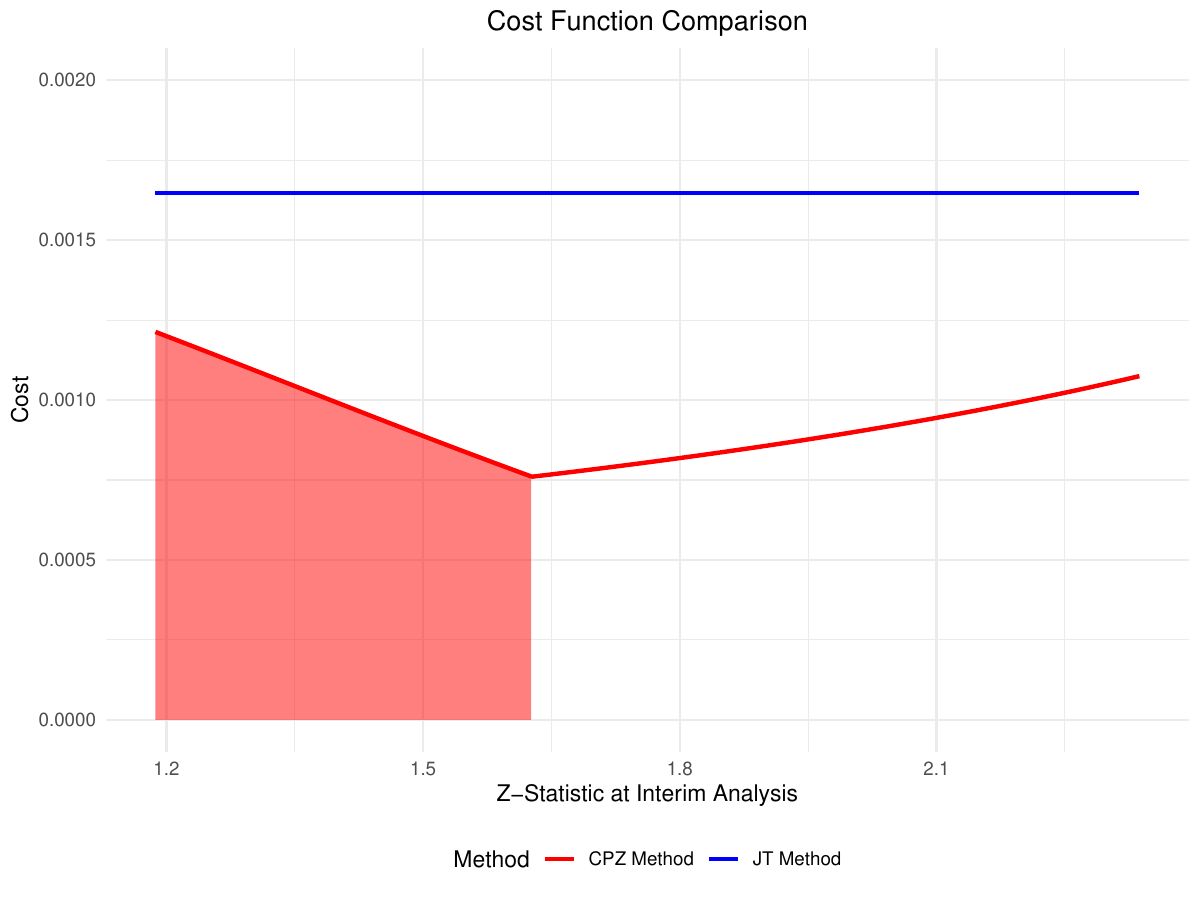}
	\includegraphics[width=0.48\textwidth]{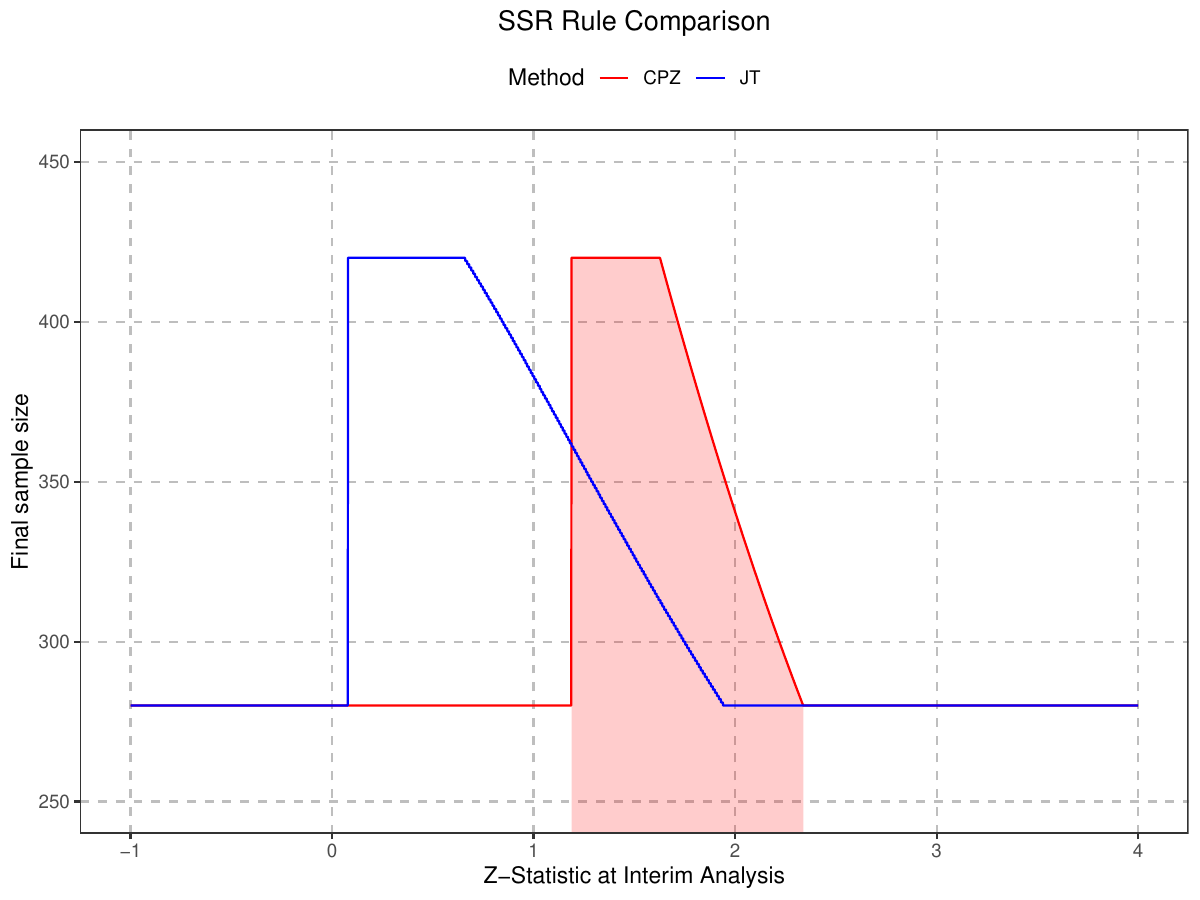}
	\caption{A comparison of the JT and CPZ rules through their implied cost functions (left) and resulting sample size rules (right).}
	\label{fig:ex5_cost_revised}
\end{figure}

\section{Simulation Studies}\label{sec:simulation}

To evaluate our proposed method, we adopt the same simulation framework used in previous studies by ~\cite{mehta2011adaptive} and~\cite{jennison2015adaptive}. We consider a Phase 3 clinical trial for a new schizophrenia treatment against an active control. The primary endpoint is assumed to be normally distributed with a known variance of $\sigma^2 = 7.5^2$.  The treatment difference, denoted by $\theta$  is the parameter of interest. An initial total sample size $n_2 = 442$  (with a 1:1 allocation ratio)  was  planned to detect an effect size of  $\theta = 1.6$. Thus, we assume $\Theta_0 = \{0\}$ and $\Theta_1 = \{1.6\}$. 

For all simulations, the type I error for all rules are maintained by applying combination test statistics (weighted inverse normal) described in section 5 of~\cite{jennison2015adaptive}.

\subsection{Comparing $n_2^{JT}$ with $n_2^{LR}$}
An interim analysis is performed after collecting data from  $n_1 = 208$ participants. The final adjusted sample size is constrained to be within the range $[442, 884]$. We compare the performance of our likelihood-ratio based rule $n_2^{LR}$  against the Jennison-Turnbull method $n_2^{JT}$.  For the JT method, we use cost parameter $\gamma = 0.25/(4\sigma^2)$, as specified in the original paper by~\cite{jennison2015adaptive}.  To provide a fair comparison, the Lagrange multipliers for our $n_2^{LR}$ rule, $\lambda_1 = 0.65\gamma$ and $\lambda_2 = 0.62\gamma$ are calibrated to match the average sample size under the alternative hypothesis  $\theta = 1.6$ of the JT method. Under this setup, the unconditional power of the two rules is nearly identical, at $64.3\%$ and $64\%$, respectively. As shown in Figure~\ref{cp_ex1}, the JT method tends to increase the sample size for relatively small interim test statistics $z_1$. This behavior carries a higher risk of wasting resources if the null hypothesis is true.  In contrast, $n_2^{LR}(z_1)$ accounts for this risk by adjusting the cost function with a likelihood ratio. This adjustment postpones aggressive sample size increases until a more promising $z_1$
is observed, thus providing a more conservative and resource-efficient approach.

\begin{figure}[h!]
	\includegraphics[width=.5\textwidth]{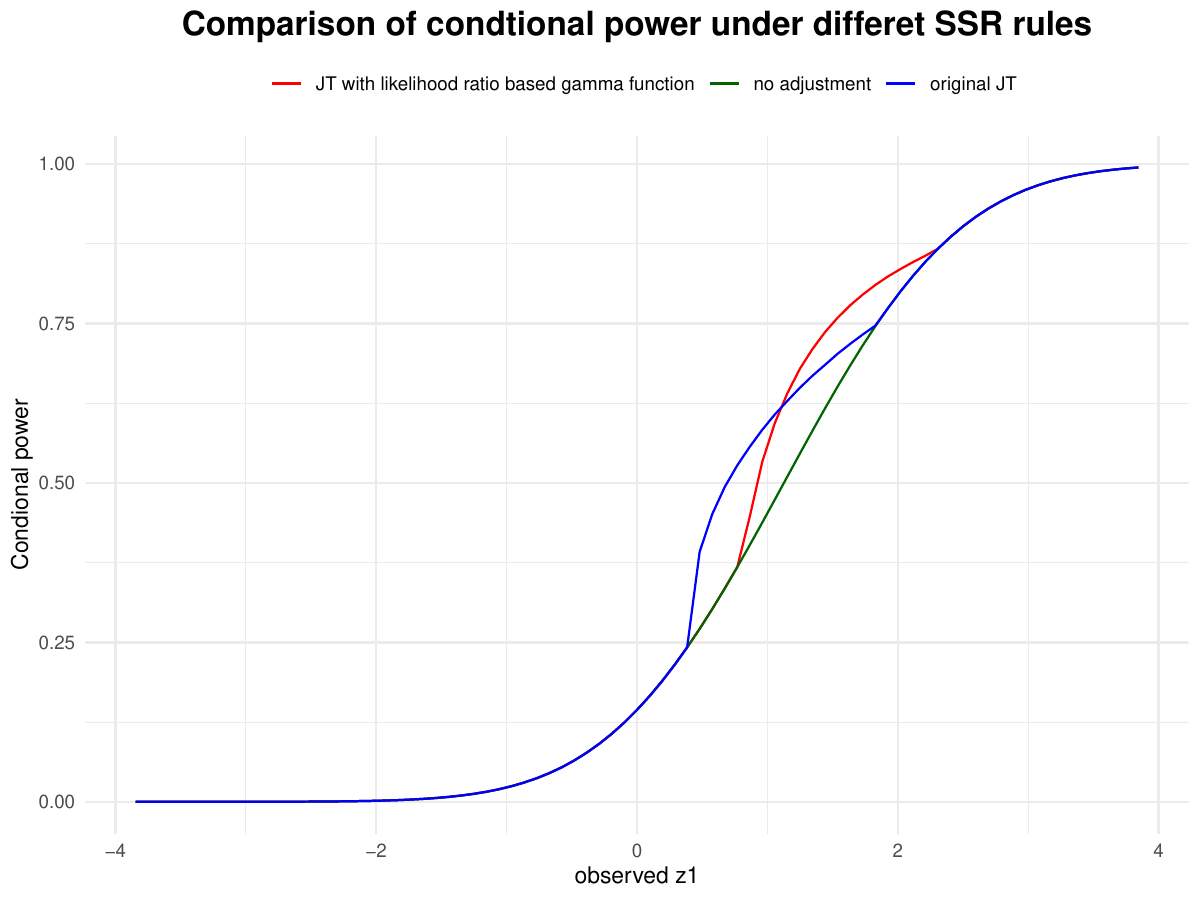}
		\includegraphics[width=.5\textwidth]{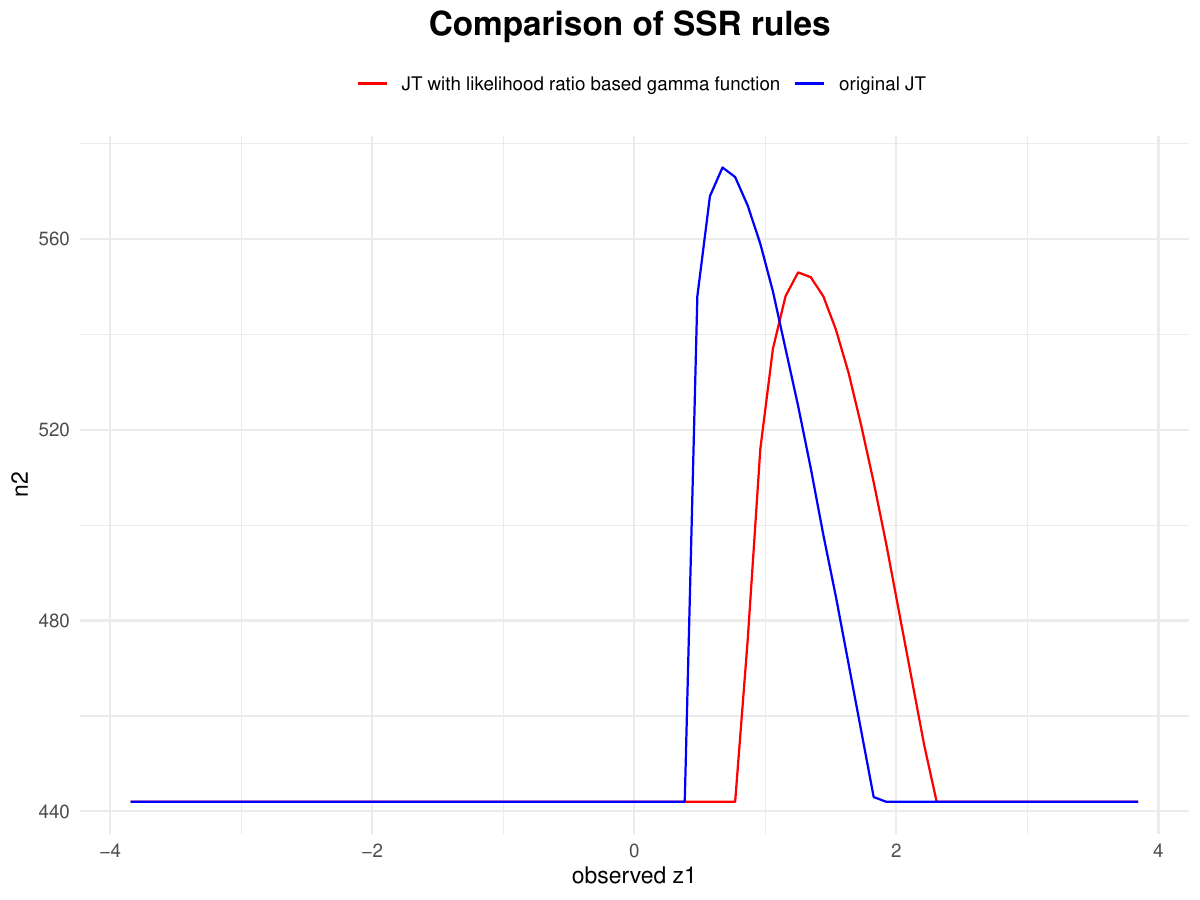}
		\caption{Comparison of $n_2^{JT}$ and $n_2^{LR}$.}
	\label{cp_ex1}
\end{figure}


\subsection{Effect of interim analysis timing on $n_2^{JT}$ and $n_2^{LR}$}

In this section, we keep $\gamma = 0.25/(4\sigma^2)$, $\lambda_1 = 0.65\gamma$ and $\lambda_2  = 0.62\gamma$ fixed while varying the interim sample size from 80 to 200 to  examine how the timing of the interim analysis affects the behavior of both methods. As shown in figure~\ref{ex2_JT}, these two methods exhibit completely different behavior. For an earlier interim analysis (smaller $n_1$), $n_2^{JT}(z_1)$  tends to increase the sample size for a weaker observed interim statistic. This is a high-risk strategy. On the contrary, due to the adjustment from the likelihood ratio, $n_2^{LR} (z_1)$ makes more conservative decisions when the interim analysis is conducted with a smaller sample size.

\begin{figure}[h!]
	\includegraphics[width=.5\textwidth]{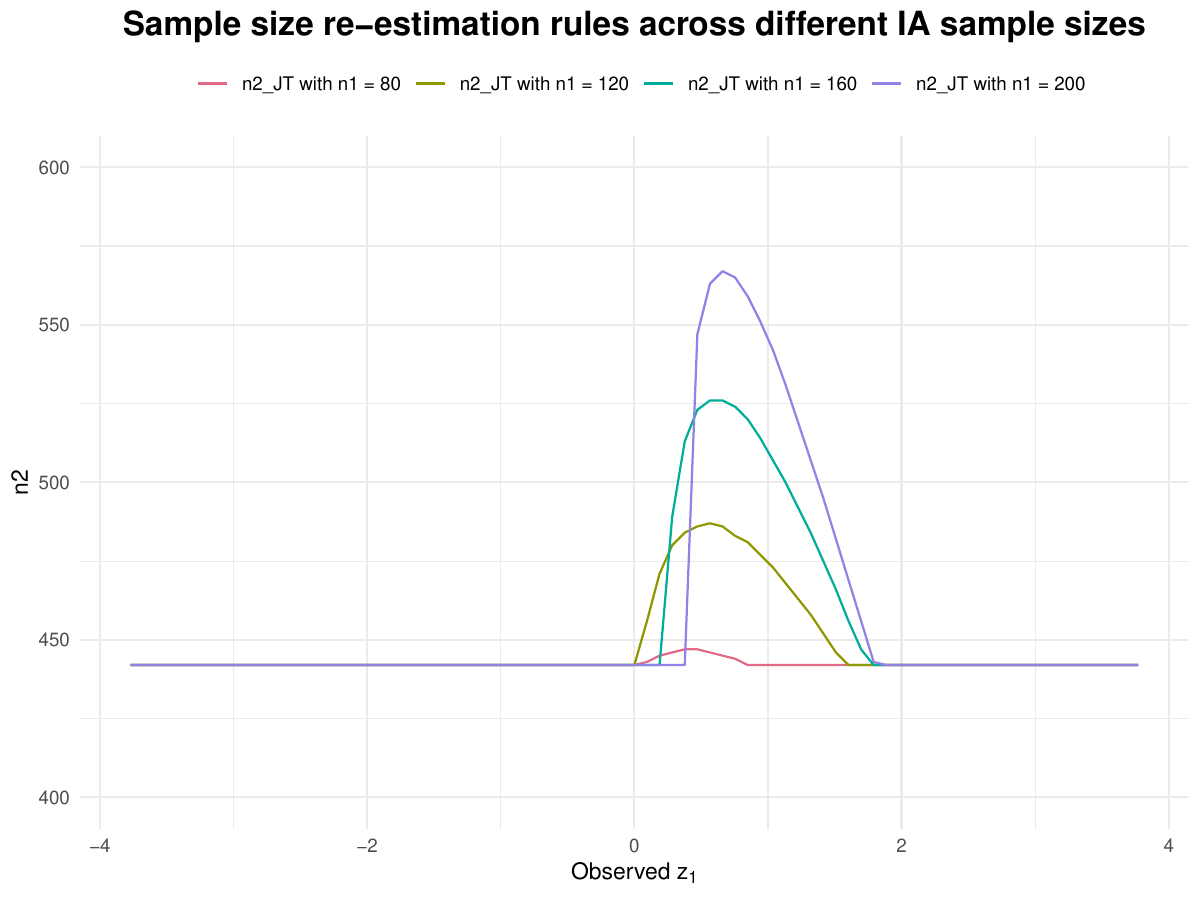}
		\includegraphics[width=.5\textwidth]{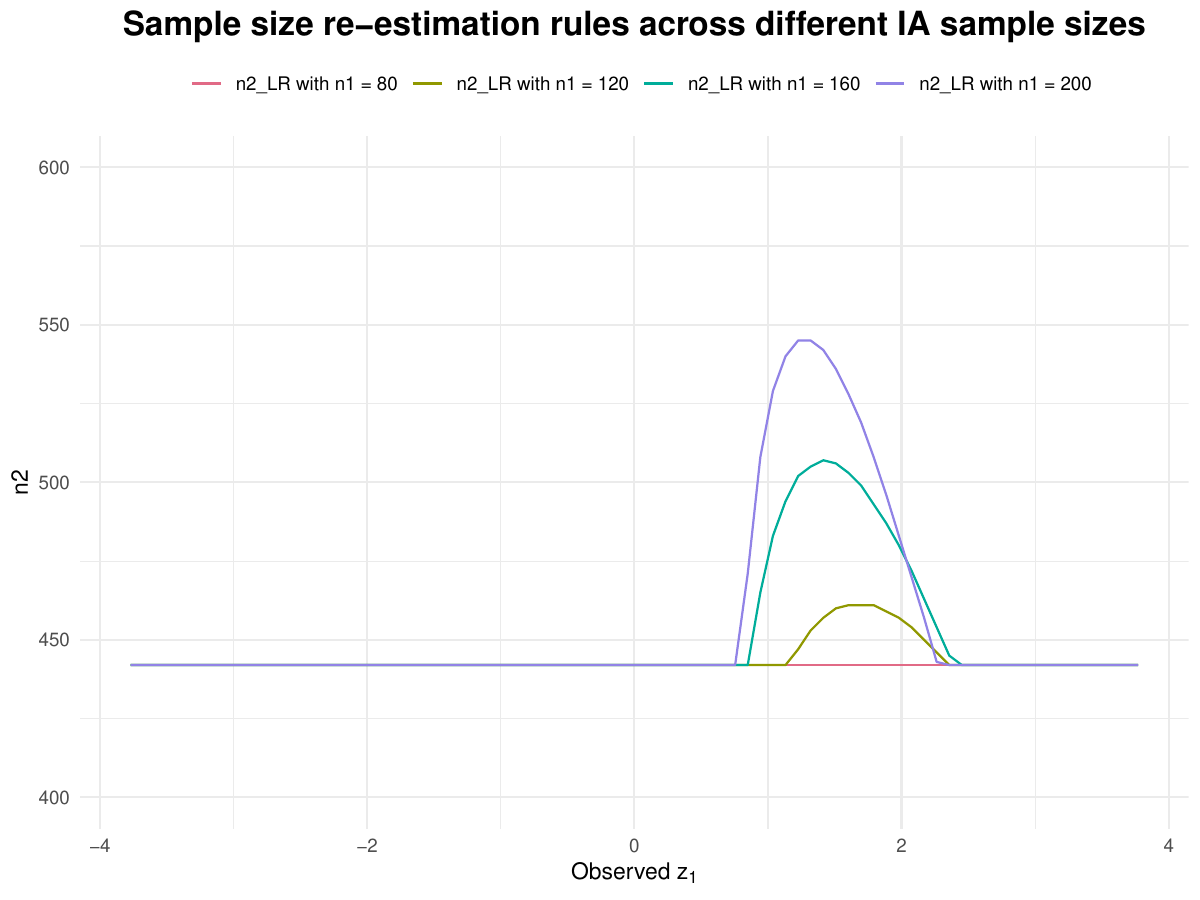}
		\caption{Sample size re-estimation rules $n_2^{JT}(z_1)$ and $n_2^{LR}(z_1)$ across different IA time points.}
	\label{ex2_JT}
\end{figure}


\subsection{ROI based sample size re-estimation}
In this section, we demonstrate the ability of our ROI-based sample size re-estimation rule, $n_2^{ROI}(z_1)$, to adapt to different cost-benefit profiles.  The method is based on a two-state prior for the null and alternative hypotheses, with two distinct sets of prior probabilities considered for illustrative purposes.
\begin{enumerate}
	\item[a.] $\pi_0 = \pi_1 = 1/2$
	\item[b.] $\pi_0 = \frac{2}{3}$ and $\pi_1 = \frac{1}{3}$
\end{enumerate}
Based on our proposed framework, the dynamic cost function for each case is given by:
\begin{equation*}
\gamma(z_1)^a = \frac{c}{V} \left(1 + \frac{f_0(z_1)}{f_{\theta}(z_1)} \right) \,,
\end{equation*}
and
\begin{equation*}
\gamma(z_1)^b = \frac{c}{V} \left(1 + \frac{2f_0(z_1)}{f_{\theta}(z_1)} \right) \,.
\end{equation*}

Here, $c$ represents the cost of adding a single participant, and $V$ is the total return if the treatment is successful. We investigate the behavior of $n_2^{ROI_a}(z_1)$ and $n_2^{ROI_b}(z_1)$ under various combinations of $(c, V)$ that reflect different ROI scenarios.

For this analysis, we assume a fixed total return $V = \$100 \text{ million}$. We then vary the cost per participant, $c$, from $\$40,000$ to $\$100,000$ to observe how the re-estimation rule's behavior changes. The results, as shown in Figure~\ref{ex3_ROI}, reveal a clear pattern. For a given cost per participant,  the rule $n_2^{ROI_a}(z_1)$  is consistently more aggressive in adding samples than  $n_2^{ROI_b}(z_1)$,   which is an expected outcome given the larger prior probability assigned to the alternative hypothesis $\Theta_1$ in case a. Across different cost values,  both rules tend to recommend more aggressive sample size increases when the cost per participant is low and become more conservative as the cost increases. This demonstrates how the ROI-based rule allows for the direct integration of economic considerations into the design of a sample size re-estimation rule, providing a flexible and economically rational approach to clinical trial design.

\begin{figure}[h!]
	\includegraphics[width=.5\textwidth]{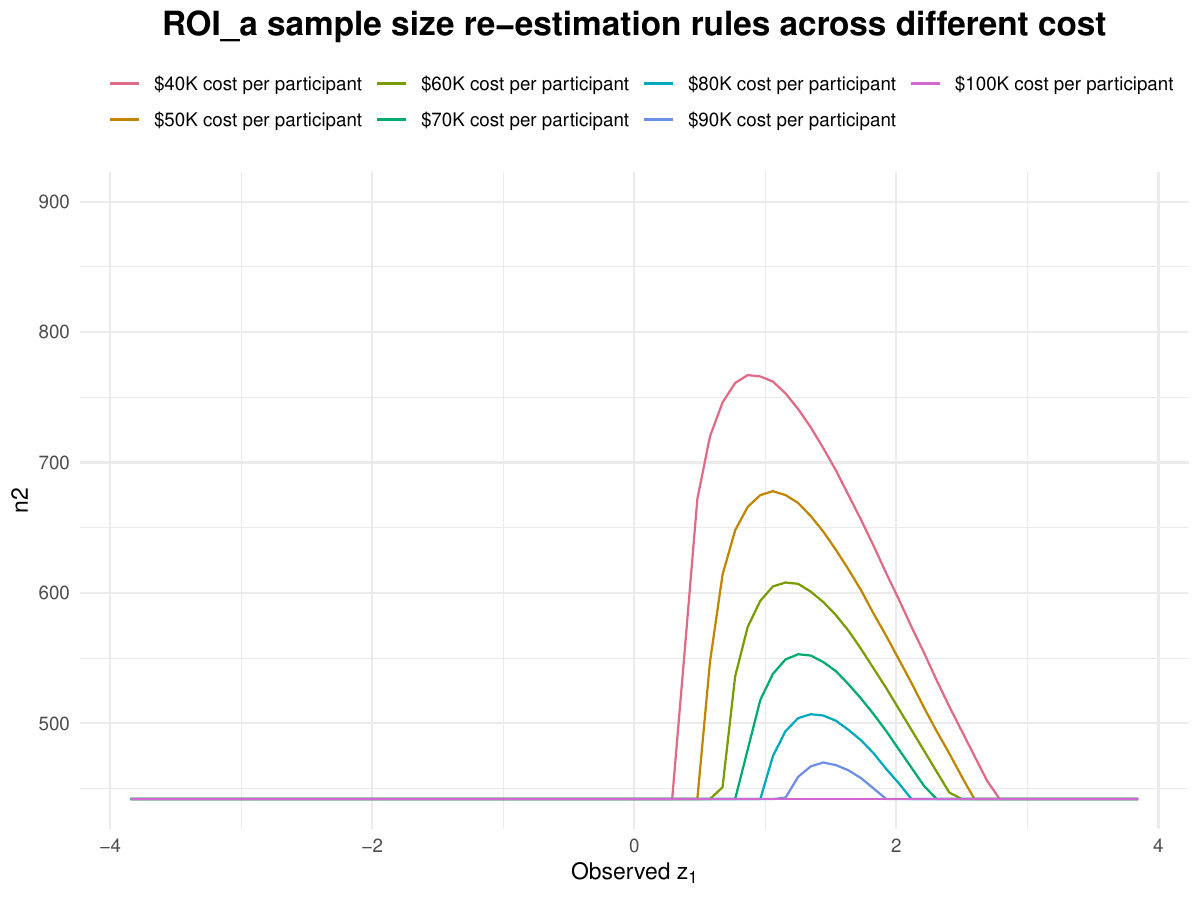}
	\includegraphics[width=.5\textwidth]{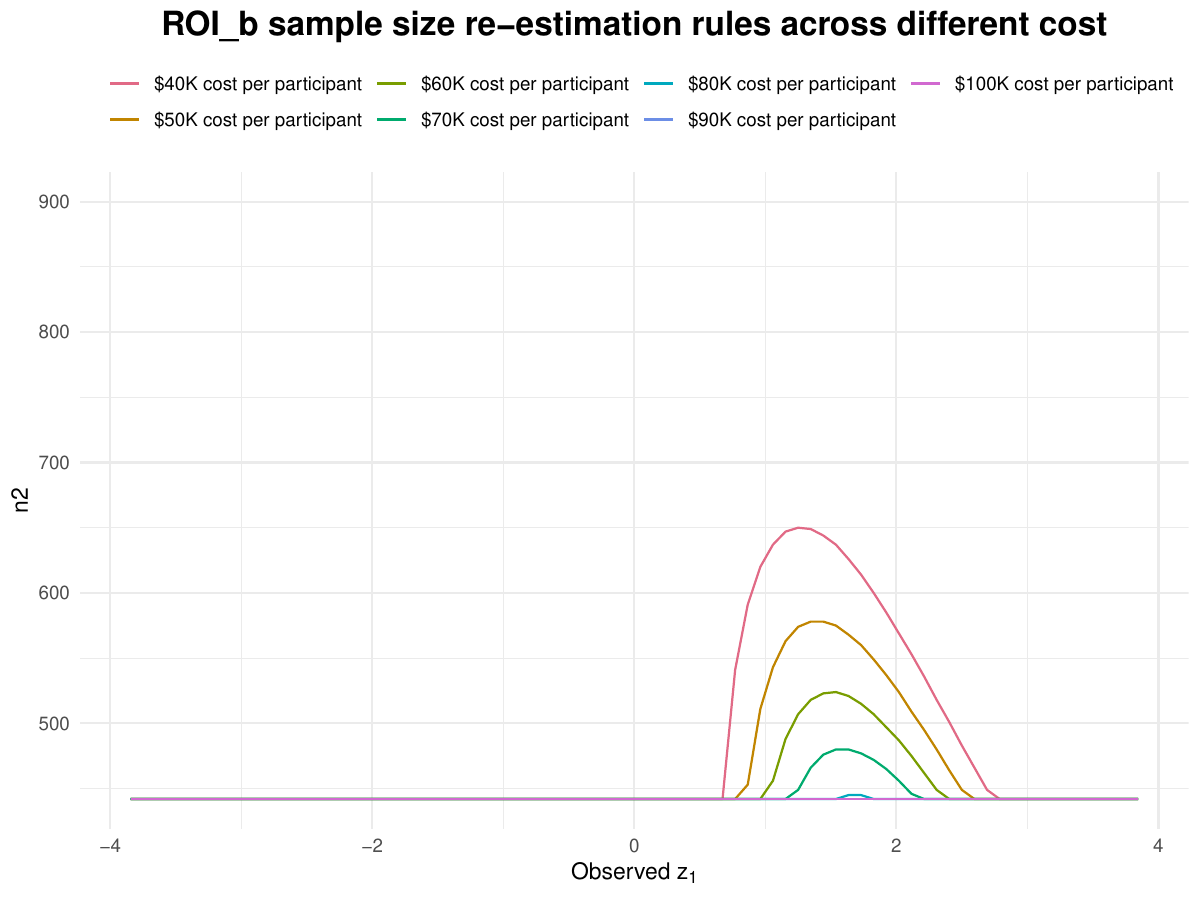}
	\caption{$n_2^{ROI_a}(z_1)$ and $n_2^{ROI_b}(z_1)$ across different cost amounts per participant.}
	\label{ex3_ROI}
\end{figure}

%
%
%
\section{Conclusions}\label{sec:end}

This paper introduces a novel and generalized framework for sample size re-estimation under dynamic costs, which adapts to the strength of interim evidence. From this framework, we developed two novel rules: the Pareto-optimal {likelihood-ratio based rule}, $n_{2}^{LR}(z_1)$, which enhances statistical efficiency by minimizing the expected sample size under the null hypothesis , and the {return on investment based rule}, $n_{2}^{ROI}(z_1)$, which directly integrates economic considerations into the re-estimation process.

Beyond these specific rules, we established a representation theorem that unifies the broader field of SSR rules within our dynamic cost framework. This theorem provides a powerful new lens for the critical appraisal of adaptive designs for sample size re-estimation. It serves a dual purpose: first, as an auditing tool to reverse-engineer a SSR rule and reveal its implicit cost structure; and second, as a new paradigm for designing a SSR rule, where sponsors can prospectively define a rational cost function that reflects their true risk tolerance and strategic goals.

In conclusion, our work provides a more principled and flexible foundation for adaptive sample size re-estimation. By shifting the focus from procedural rules to the underlying economic and statistical rationale, the dynamic cost framework supports the development of more efficient, transparent, and justifiable SSR rules.

\bibliographystyle{ims-mine}
\bibliography{isregen}

\begin{thebibliography}{8}
\expandafter\ifx\csname natexlab\endcsname\relax\def\natexlab#1{#1}\fi
\expandafter\ifx\csname url\endcsname\relax
  \def\url#1{\texttt{#1}}\fi
\expandafter\ifx\csname urlprefix\endcsname\relax\def\urlprefix{URL }\fi

\bibitem[{Friede and Kieser(2013)}]{friede2013blinded}
\text{Friede, T.} and \text{Kieser, M.} (2013).
\newblock Blinded sample size re-estimation in superiority and noninferiority
  trials: bias versus variance in variance estimation.
\newblock \textit{Pharmaceutical Statistics} \textbf{12} 141--146.

\bibitem[{Hsiao et~al.(2019)Hsiao, Liu and Mehta}]{hsiao2019optimal}
\text{Hsiao, S.~T.}, \text{Liu, L.} and \text{Mehta, C.~R.} (2019).
\newblock Optimal promising zone designs.
\newblock \textit{Biometrical Journal} \textbf{61} 1175--1186.

\bibitem[{Jennison and Turnbull(2015)}]{jennison2015adaptive}
\text{Jennison, C.} and \text{Turnbull, B.~W.} (2015).
\newblock Adaptive sample size modification in clinical trials: start small
  then ask for more?
\newblock \textit{Statistics in medicine} \textbf{34} 3793--3810.

\bibitem[{Mehta et~al.(2022)Mehta, Bhingare, Liu and
  Senchaudhuri}]{mehta2022optimal}
\text{Mehta, C.}, \text{Bhingare, A.}, \text{Liu, L.} and \text{Senchaudhuri,
  P.} (2022).
\newblock Optimal adaptive promising zone designs.
\newblock \textit{Statistics in Medicine} \textbf{41} 1950--1970.

\bibitem[{Mehta and Pocock(2011)}]{mehta2011adaptive}
\text{Mehta, C.~R.} and \text{Pocock, S.~J.} (2011).
\newblock Adaptive increase in sample size when interim results are promising:
  a practical guide with examples.
\newblock \textit{Statistics in medicine} \textbf{30} 3267--3284.

\bibitem[{Pilz et~al.(2021)Pilz, Kunzmann, Herrmann, Rauch and
  Kieser}]{pilz2021optimal}
\text{Pilz, M.}, \text{Kunzmann, K.}, \text{Herrmann, C.}, \text{Rauch, G.} and
  \text{Kieser, M.} (2021).
\newblock Optimal planning of adaptive two-stage designs.
\newblock \textit{Statistics in Medicine} \textbf{40} 3196--3213.

\bibitem[{Pocock(1977)}]{pocock1977group}
\text{Pocock, S.~J.} (1977).
\newblock Group sequential methods in the design and analysis of clinical
  trials.
\newblock \textit{Biometrika} \textbf{64} 191--199.

\bibitem[{Proschan(2005)}]{proschan2005two}
\text{Proschan, M.~A.} (2005).
\newblock Two-stage sample size re-estimation based on a nuisance parameter: a
  review.
\newblock \textit{Journal of biopharmaceutical statistics} \textbf{15}
  559--574.

\end{thebibliography}

\end{document}